\newcommand*\Let[2]{\State #1 $\gets$ #2}
\DeclareMathOperator{\Rel}{Rel}
\newcommand{\T}{\mathsf{T}}
\newcommand{\tmin}{\theta_{\mathrm{min}}}
\newcommand{\tmax}{\theta_{\mathrm{max}}}
\newcommand{\tdis}{\theta_{\mathrm{dis}}}
\newcommand{\tre}{\theta_{\mathrm{re}}}
\theoremstyle{plain}
\newtheorem{theorem}{Theorem}
\newtheorem{proposition}[theorem]{Proposition}
\newtheorem{problem}{Problem}
\theoremstyle{definition}
\theoremstyle{remark}
\begin{document}

\title{Vertex Addition to a Ball Graph With Application to Reliability\\ and Area Coverage in Autonomous Swarms}

\author{Calum Buchanan\orcidlink{0000-0002-7381-8060}, Puck Rombach\orcidlink{0000-0002-8374-0797}, James Bagrow\orcidlink{0000-0002-4614-0792}, and Hamid R. Ossareh\orcidlink{0000-0002-4964-569X}
        % <-this % stops a space
\thanks{This work has been submitted to the IEEE for possible publication. Copyright may be transferred without notice, after which this version may no longer be accessible. This work was supported in part by NASA under cooperative agreement VT-80NSSC20M0213. The work of C.~Buchanan was also supported in part by the Vermont Space Grant Consortium Graduate Fellowship Program. A preliminary version of this paper, entitled {\em Node placement to maximize reliability of a communication network with application to satellite swarms}, was presented at the {\em 2023 IEEE International Conference on Systems, Man, and Cybernetics (SMC)}, Honolulu, Oahu, HI, USA. {\em (Corresponding author: Calum Buchanan.)}}
        % <-this % stops a space
\thanks{Calum Buchanan, Puck Rombach, and James Bagrow are with the Department of Mathematics and Statistics, University of Vermont, Burlington, VT 05405 USA (e-mail: calumjmb@comcast.net; puck.rombach@uvm.edu; james.bagrow@uvm.edu).}
\thanks{Hamid R. Ossareh is with the Department of Electrical and Biomedical Engineering, University of Vermont, Burlington, VT 05405 USA (e-mail: hossareh@uvm.edu).}
\thanks{Implementation in Python 3 of the algorithms described herein is available on-line at \url{https://doi.org/10.5281/zenodo.16746466}.}
}

% The paper headers
% \markboth{Journal of \LaTeX\ Class Files,~Vol.~14, No.~8, August~2021}%
% {Shell \MakeLowercase{\textit{et al.}}: A Sample Article Using IEEEtran.cls for IEEE Journals}

% \IEEEpubid{0000--0000/00\$00.00~\copyright~2021 IEEE}
% Remember, if you use this you must call \IEEEpubidadjcol in the second
% column for its text to clear the IEEEpubid mark.

\maketitle

\begin{abstract}
A unit ball graph consists of a set of vertices, labeled by points in Euclidean space, and edges joining all pairs of points within distance 1. These geometric graphs are used to model a variety of spatial networks, including communication networks between agents in an autonomous swarm. In such an application, vertices and/or edges of the graph may not be perfectly reliable; an agent may experience failure or a communication link rendered inoperable. With the goal of designing robust swarm formations, or unit ball graphs with high reliability (probability of connectedness), in a preliminary conference paper we provided an algorithm with cubic time complexity to determine all possible changes to a unit ball graph by repositioning a single vertex. Using this algorithm and Monte Carlo simulations, one obtains an efficient method to modify a unit ball graph by moving a single vertex to a location which maximizes the reliability. Another important consideration in many swarm missions is area coverage, yet highly reliable ball graphs often contain clusters of vertices. Here, we generalize our previous algorithm to improve area coverage as well as reliability. Our algorithm determines a location to add or move a vertex within a unit ball graph which maximizes the reliability, under the constraint that no other vertices of the graph be within some fixed distance. We compare this method of obtaining graphs with high reliability and evenly distributed area coverage to another method which uses a modified Fruchterman-Reingold algorithm for ball graphs.
\end{abstract}

\begin{IEEEkeywords}
ball graph, disk graph, vertex placement, reliability, area coverage, autonomous swarm, satellite swarm, swarm robotics, formation planning.
\end{IEEEkeywords}

\section{Introduction}
\IEEEPARstart{R}{ecent} decades have seen a rise in autonomous swarms of agents of various types. For instance, autonomous swarms of satellites are likely to replace monolithic NASA missions in the near future, due increased flexibility and reduced cost~\cite{morgan2012swarm, stoica, taxonomy, fishman}.
Swarms of autonomous underwater vehicles for oceanography and coastal management~\cite{9153840}, unmanned aerial vehicles for agricultural and environmental purposes~\cite{budiharto2019review, saffre2023wild}, and swarm robotics for search and rescue missions~\cite{hu2020voronoi, dorigo2021swarm} (among other purposes) are also the subject of significant current research.
In order that the agents in a swarm work together to accomplish a task, such as Earth observation by satellites~\cite{farrag2021satellite}, each is equipped with a communication device.
Without centralized control, the network induced by the agents and their communication links must remain connected; that is, any agent should be able to pass a message to any other, possibly via some intermediary agents.
Otherwise, were the network to become disconnected, the swarm would not be able to function as a cohesive unit.
As connectedness of the network is crucial, and as various uncertainties may render agents or the communication links between them inoperable, it is of great importance to design formations which induce communication networks whose connectedness is robust to potential failures.
In the context of satellite swarms, for instance, such uncertainties may include hardware issues, radiation, space debris, or, in a lower earth orbit, clouds and other atmospheric conditions~\cite{chen2024reliability, ESA, Howell, SoA}.

Assuming that every agent in a swarm is equipped with the same omnidirectional communication device, the swarm's communication network can be modeled by a geometric graph known as a {\em unit ball graph}.
Vertices of the ball graph are points in space (representing agents) and edges connect vertices within distance $1$ (communication range after scaling).
Unit ball graphs are also commonly used to model a wide variety of other communication networks, especially those using radio broadcast or optimal communication.
The two-dimensional counterpart to a unit ball graph is known as a {\em unit disk graph}.
These are often easier to work with and, in a number of applications such as radio broadcast networks~\cite{aboelfotoh_computing_1989}, reasonable simplifications.
In an application to satellite imaging of a region on the surface of an object, it is also reasonable to assume that the satellites in a formation lie approximately in a plane above the region.

In order to measure the robustness of the connectedness of a network, a theory known as network reliability has sprung forth.
Commonly, one assigns a probability of operation to each edge and vertex of a graph, and asks for the probability of connectedness of the subgraph obtained by including each vertex independently with its given probability of operation, and each edge between operational vertices independently with its given probability of operation. 
The case in which all vertices, but not all edges, operate with probability $1$ is known as {\em all-terminal reliability}~\cite{colbourn_combinatorics_1987}.
We refer to the more general case simply as {\em reliability}, though in the literature it has been referred to as {\em residual connectedness}~\cite{shpungin_combinatorial_2006}.
Unfortunately, the all-terminal reliability, even of a unit disk graph, is $\sharp${\sf P}-complete to compute~\cite{aboelfotoh_computing_1989}.
A large amount of work has thus gone into efficiently estimating~\cite{dagum2000optimal} and providing bounds~\cite{ball_chapter_1995} on the reliability of a graph.

Now, in the context of a hypothetical mission for an autonomous swarm, our goal is to ensure that the communication network has high reliability.
Of course, a formation in which all agents are pairwise within communication range is optimally reliable, but mission constraints may render such a formation inefficient.
For instance, suppose that a satellite swarm is assigned to image a region on the surface of an object.
It is natural to assign some satellites to image the outer boundary of the region.
The problem is then to assign the remaining satellites to cover the interior of the region so that the entire region is evenly covered and so that the communication network is highly reliable.

One method to keep agents spread out is to ensure some minimum distance between any given pair.
The main contribution of this paper is an efficient algorithm to enumerate all possible neighborhoods that a vertex $v$ could have when added to a unit disk graph, under the constraint that no other vertices are within some fixed Euclidean distance $b$ of $v$ (Algorithm~\ref{alg:enumerate_buffer_regions} of Section~\ref{sec:buffer}).
We use this algorithm to move vertices one at a time, under the same constraint, to locations which maximize the reliability of the resulting graph.
An alternative method is to use a spring layout- (or Fruchterman-Reingold)-type algorithm, treating vertices like similarly charged particles which repel and edges like springs which attract or repel based on the resting length; we provide a modified spring layout algorithm which avoids destroying edges in a unit ball graph in Section~\ref{sec:spring}.

The remainder of this paper is organized as follows.
First, in Section~\ref{sec:radial_and_ball}, we review an algorithm from our preliminary conference paper~\cite{buchanan2023node} to enumerate all possible neighborhoods for a vertex added to a unit ball graph.
This can be used to move a vertex in a unit ball graph to a location which maximizes the network reliability of the resulting graph.
In Section~\ref{sec:area_coverage}, we consider both reliability and area coverage.
In Section~\ref{sec:buffer}, we describe our main algorithm in terms of unit disk graphs, noting that this algorithm can be generalized to unit ball graphs as well.
In Section~\ref{sec:spring}, we provide a variation of a classical algorithm of Fruchterman and Reingold~\cite{fruchterman1991graph} to space out vertices in a ball graph without destroying edges.
In Section~\ref{sec:formation_planning}, we compare the effectiveness of the algorithms in Sections~\ref{sec:buffer} and~\ref{sec:spring} in producing unit disk graphs with high reliability and evenly spread vertices over a region.
The comparison shows the former algorithm to be particularly effective.
We conclude with some open questions and future directions in Section~\ref{sec:future_directions}.

\subsection{Definitions and notations}

Let $V$ be a set of points in space, and let $G$ be the graph with vertex set $V$ whose edges connect pairs of distinct vertices $u,v$ with $\|u - v\|_2 < 1$.
If $V \subset \mathbb{R}^2$, we call $G$ a {\em unit disk graph}, and if $V \subset \mathbb{R}^3$, we call $G$ a {\em unit ball graph}.
The {\em neighborhood} of a vertex $v$ in $V$ is the set of vertices adjacent to $v$.

Let $v$ be a point in $\mathbb{R}^2$. For positive real numbers $b$ and $c$ with $b < c$, we define the {\em $(b,c)$-annulus} centered at $v$ to be the set of points $w$ in $\mathbb{R}^2$ such that $b < \|v - w\|_2 < c$; its boundary is the set of points $w$ such that $\|v - w\|_2 \in \{b,c\}$.

Let $G$ be a graph, and for each edge $e$ in $G$, assign a probability of operation $p_e \in [0,1]$.
The {\em all-terminal reliability} of $G$, denoted $\Rel_A(G)$, is the probability that the graph $G'$ obtained from $G$ by deleting each edge $e$ independently with probability $1-p_e$ is connected.
In other words, $G'$ is the graph on the vertex set of $G$ obtained by sampling each edge $e$ in $G$ independently with probability $p_e$.
If each vertex $v$, along with its incident edges in $G'$, is subsequently deleted independently with probability $1 - p_v$, the probability that the resulting graph $G''$ is connected is called the {\em residual connectedness} $\Rel(G)$.
Equivalently, $G''$ is obtained by first sampling each vertex $v$ in $G$ independently with probability $p_v$, then sampling the edges $e$ between operational vertices independently with probability $p_e$.

\subsection{Problem statement}
Let $R$ be a connected region in $\mathbb{R}^2$, $V$ a finite set of points in $R$, and $G$ the unit disk graph with vertex set $V$.
Our goal is to modify $G$ by moving or adding vertices within $R$ in order to increase $\Rel(G)$ or $\Rel_A(G)$ while decreasing the size of the largest empty circle (containing no points in $V$) whose center is contained in $R$.
We do so in order to determine highly reliable unit disk graphs in $R$ which do not leave large uncovered areas.
We note that deleting a vertex from $G$ can increase the reliability without increasing the size of the largest empty circle.
While our results are easily adapted to accommodate this case, our application to formation planning of autonomous swarms, in which agents are not typically dispensable, makes this an undesirable option.

\section{Radial sweeps and ball graphs}\label{sec:radial_and_ball}

We begin with preliminaries on radial sweep algorithms as they apply to disk and ball graphs.
We outline an algorithm from our preliminary paper~\cite{buchanan2023node}, using radial sweeps to list all possible neighborhoods that a new vertex could have when added to a disk or ball graph.
This will provide a basis for our main algorithm, described in Section~\ref{sec:area_coverage}.

\subsection{Radial Sweeps}\label{sec:radial}

Radial sweeps are commonplace tools in computational geometry used to determine, for example, the maximum number of points from a finite set $V \subset \mathbb{R}^2$ which are contained in a circle of radius $1$~\cite{jindal2023}.
The idea is quite simple: we imagine a unit circle with center $v + \left[ \begin{matrix} 1 & 0 \end{matrix} \right]^\T$ for a given point $v$ in $V$.
We rotate, or {\em sweep}, the circle $360$ degrees counterclockwise, keeping $v$ fixed on its boundary.
The angles of rotation at which each point $w$ within distance $2$ of $v$ enters and exits the circle as it sweeps can be calculated using basic trigonometry.
In particular, if $v = \left[ \begin{matrix} x_v & y_v \end{matrix} \right]^\T$ and $w = \left[ \begin{matrix} x_w & y_w \end{matrix} \right]^\T$, then letting $x = x_v - x_w$, $y = y_v - y_w$, $A = \arctan{(y / |x|)}$, and $B = \arccos{(\sqrt{x^2 + y^2} / 2)}$, the vertex $w$ enters the sweep around $v$ after a rotation of $\pi - (A + B)$ degrees and exits at $\pi - (A - B)$ degrees.
See Figure~\ref{fig:angularsweep}, adapted from~\cite{buchanan2023node}.

\begin{figure}
    \centering
    \begin{tikzpicture}
    \draw[->] (0,0) -- (4,0);
    
    \draw (90/3 - 10 : 1.5) circle (1.5);
    \draw[dashed, ->] (0,0) -- (90/3 - 10 : 4);

    \draw[black!70] (90 - 10 : 1.5) circle (1.5);
    \draw[black!70, dashed, ->] (0,0) -- (80:4);

    \node (theta0) at (8 : 4.2) {$\tmin$};

    \node (theta1) at (90/1.75 - 5 : 3.8) {\textcolor{black!60}{$\tmax$}};

    \draw[fill=black!100, draw=black!100] (0,0) circle (.05);
    \node (p) at (90/3 : -.25) {$v$};

    \draw[fill=black!100, draw=black!100] (90/3 - 10 : 1.5) + (80:1.5) circle (.05);
    \node (q) at (1.5, 1.75) {$w$};

    \draw[black!70, ->] (0 : 3.5) arc (0 : 80 : 3.5);

    \draw[black!100, ->] (0 : 3.75) arc (0 : 20 : 3.75);
    
    \end{tikzpicture}
    \caption{A circle of radius $1$ revolves in a counterclockwise motion around a fixed point $v$ on its boundary in a radial sweep.
    A point $w$ enters and exits the circle at the angles $\tmin$ and $\tmax$, respectively. (Adapted from~\cite{buchanan2023node}.)}
    \label{fig:angularsweep}
\end{figure}
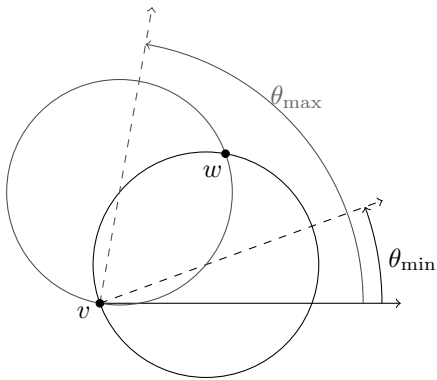

Performing these trigonometric calculations for each vertex $v$ in $V$ and each vertex $w$ in $V - v$ with $\|v - w\|_2 < 2$, one can determine all subsets $N$ of $V$ for which some radius-$1$ circle contains all of the points in $N$ and none in $V - N$ as follows.
Each time a point $w \in V$ enters or exits the circle sweeping around a point $v \in V$, we record the subset $S$ of $V$ contained in the interior this circle with $v$ and $w$ on its boundary.
By shifting and/or rotating the circle slightly, one can obtain a unit circle with any of $S$, $S \cup v$, $S \cup w$, or $S \cup \{v,w\}$ in its interior.
All of these are subsets $N$ of $V$ as described.
On the other hand, if $N$ is the subset of $V$ contained in a given radius-$1$ circle, then the circle can be shifted until two points in $V$ lie on its boundary.
Thus, after performing a radial sweep around each vertex in $V$, we obtain in this manner all such subsets $N$ of $V$.

Note that a subset $N$ of $V$ contained in the interior of some radius-$1$ circle which contains no other points in $V$ would be the neighborhood of a vertex placed at the center of said circle in the corresponding unit disk graph. Conversely, any subset of $V$ which could be the neighborhood of an appropriately placed vertex is contained in a radius-$1$ circle which contains no other points from $V$.
Thus, the previously summarized algorithm finds all possible neighborhoods for a new vertex added to a unit disk graph on $V$.
Although the number of possible neighborhoods for a new vertex added to an abstract graph is exponential, the number of such subsets $N$ of $V$ is on the order of $n^2$, where $n = |V|$~\cite{yaglom1964challenging}.
The algorithm described above is nearly best possible, running in $\mathcal{O}(n^2)$ time.
Further, our algorithm generalizes to finite subsets $V$ of $\mathbb{R}^3$, running in $\mathcal{O}(n^3)$ time, which is also the order of the maximum number of subsets $N$ of $V$ contained in a unit sphere with no other points from $V$ in its interior~\cite{yaglom1964challenging}.
These upper bounds can be improved using parameterized complexity. 
Letting $\Delta$ denote the maximum number of points in $V$ contained in a ball of radius $2$, our 2-D algorithm runs in time $\mathcal{O}(n\Delta)$ and the 3-D variation in time $\mathcal{O}(n^2\Delta)$~\cite{buchanan2023node}.

\subsection{Unit disk and unit ball graphs}

Disk and ball graphs are used to model a wide array of spatial relationships between objects.
Supposing that each agent in an autonomous swarm is equipped with the same omnidirectional communication device, we can model the swarm formation by a unit ball graph\footnote{We scale the space by a factor of $1/r$, where $r$ is the range of the communication device.} whose vertices represent agents in the formation and whose edges connect pairs of agents within communication distance.

Given the dynamic nature of swarm formations, it is natural to ask how the communication network changes with the formation.
Generally, one might like to know which changes to the graph structure of a unit ball graph $G$ are obtainable by repositioning vertices.
This problem is difficult in general, as the recognition problem for unit ball graphs is {\sf NP}-hard~\cite{breu1998unit}.
However, using the algorithm described in Section~\ref{sec:radial}, we are able to enumerate in polynomial time all of the unit ball graphs obtainable by moving a single vertex in $G$.
This can be modeled by deleting a vertex $v$ from $G$ and adding it back in a new location.
The algorithm summarized in Section~\ref{sec:radial} can be used to list all of the possible neighborhoods that $v$ could have.

\section{Reliability and area coverage}\label{sec:area_coverage}

Equipped with an algorithm to efficiently determine all possible neighborhoods for a new vertex added to a unit disk or unit ball graph, in~\cite{buchanan2023node}, we used Monte Carlo simulations to determine the location to add or move a vertex which would maximize either $\Rel$ or $\Rel_A$.
Indeed, for any efficiently calculable or estimable graph parameter $P$, one can use our algorithm to optimize $P$ by moving or adding a single vertex.
We note that, when using our algorithm to maximize reliability, only those neighborhoods which are maximal with respect to set inclusion need be considered, for $\Rel$ and $\Rel_A$ are both monotonic with respect to deleting edges.
We compared the reliabilities, over all maximal neighborhoods for a new vertex, of the graphs obtained by adding a vertex with said neighborhood.
For the neighborhood $N$ providing the most reliable estimate, we added a new vertex at the center of the smallest enclosing circle for $N$, which is the point which minimizes the maximum distance to a point in $N$.
Finding this center can be done in linear time~\cite{megiddo1983linear, welzl2005smallest}.

%we determined the location to place the vertex with this neighborhood $N$ using the center of the smallest enclosing circle for $N$ (which can be done in linear time).

We tested this method on random geometric graphs in~\cite{buchanan2023node}, finding that repositioning a single vertex using our algorithm provides higher all-terminal reliability on average than adding ten additional vertices uniformly at random to the visible area ({\em i.e.}, adding vertices randomly conditioned on connectedness) with edge-operation probabilities of $90\%$.
Applying our algorithm to add new vertices one at a time to maximize reliability, we notice that, after a number of iterations, the vertices are clustered together.
This aligns with the results of~\cite{krupke_distributed_2015-1} in which highly reliable graphs were constructed to connect a fixed set of points, and these resembled Steiner trees with thick bands of vertices.

In the context of many a swarm mission, like satellite imaging of a region, such a formation may not be desirable.
A natural question then arises: {\em which unit disk graphs have evenly distributed vertices over a given area and have high reliability?}
In the spirit of our first algorithm, we propose to add or move vertices one at a time, maximizing reliability at each iteration, but imposing the constraint that no vertex be within some fixed distance $b$ of the vertex being added or moved.
We present such an algorithm in Section~\ref{sec:buffer}, and we propose an alternative solution to the area coverage problem using a modified spring layout algorithm in Section~\ref{sec:spring}.

\subsection{Finding neighborhoods for a vertex with a buffer}\label{sec:buffer}

The methods we provided in~\cite{buchanan2023node} were not geared toward a specific autonomous swarm mission, and we acknowledge that high reliability, while important, is not the only quality desired of a swarm formation in performing a mission.
A tight cluster of satellites, for instance, has a highly reliable communication network, but would not perform a task such as imaging a large area very well, where maximizing the spread of the satellites would maximize the area that can be imaged.
One can imagine similar scenarios in other spatial networks, such as swarm robotic search and rescue missions.
It thus seems natural to avoid tight clustering of vertices while keeping reliability high.

There are a number of ways that one might handle this problem.
We propose a method whereby a buffer distance is imposed around each vertex as it is added to the unit disk graph $G$.
For a value $b$ with $0 < b < 1$, we note that a vertex which has no neighbors within distance $b$ must have all of its neighbors contained in the $(b,1)$-annulus centered at that vertex (see Figure~\ref{fig:vx_w/buffer}).
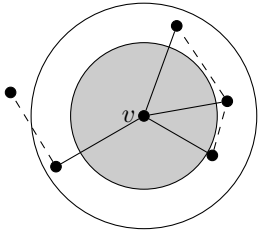
\begin{figure}
    \centering
    \begin{tikzpicture}
    [every node/.style={circle, draw=black!100, fill=black!100, inner sep=0pt, minimum size=4pt}, scale=1.5]

    \draw[fill=none] (0,0) circle (1);
    \draw[fill=black!20] (0,0) circle (.65);
    \node (v) [label={left:$v$}] at (0,0) {};

    \node (a) at (10:.75) {};
    \node (b) at (70:.85) {};
    \node (c) at (-30:.7) {};
    \node (d) at (-150:.9) {};
    \node (x) at (170:1.2) {};

    \draw[dashed] (c)--(a)--(b);
    \draw[dashed] (d)--(x);

    \foreach \i in {a,b,c,d}
    {
    \draw (v)--(\i);
    }
    
    \end{tikzpicture}
    \caption{The neighbors of a vertex $v$ added with a buffer to a disk graph are contained within an annulus with no vertices in its inner circle. The buffer is depicted by a shaded region, and existing edges in the graph before the addition of $v$ are depicted by dashed edges.}
    \label{fig:vx_w/buffer}
\end{figure}
In other words, the possible neighborhoods for a new vertex added with a buffer are precisely the subsets $S$ of the vertex set $V$ of $G$ such that $S$ is contained in a $(b,1)$-annulus and all points in $V - S$ lie outside of the radius-$1$ circle.
We use a modified radial sweep algorithm to determine all such subsets $S$ of $V$.
This subsection is devoted to proving the following.

\begin{theorem}\label{thm:buffer}
    Let $G$ be a unit disk graph with vertex set $V$, and let $0 < b < 1$.
    The list of possible neighborhoods for a vertex $v$ added to $G$ such that $\|v - w\|_2 > b$ for all $w \in V$ can be obtained in $\mathcal{O}(n \Delta)$ time, where $\Delta$ is the maximum cardinality of a subset of $V$ contained in a radius-$2$ circle.
\end{theorem}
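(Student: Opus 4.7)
The plan is to adapt the radial sweep of Section~\ref{sec:radial} by tracking, in addition to crossings of the unit circle, crossings of the inner buffer circle of radius $b$. For each pivot $u \in V$, I will perform two sweeps: one in which a candidate center $v^*$ moves around $u$ at distance exactly $1$ (so that $u$ sits on the outer boundary of the candidate disk) and one in which $v^*$ moves around $u$ at distance exactly $b$ (so that $u$ sits on the inner buffer boundary). During each sweep, I maintain two data structures: the current candidate neighborhood $S$ (the vertices of $V$ strictly inside the radius-$1$ disk around $v^*$), and a counter $k$ recording how many vertices lie strictly inside the radius-$b$ disk. Whenever $k=0$, I record $S$ and its natural variants (including or excluding the vertex on the boundary) as a possible neighborhood. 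The entry/exit angles of any $w$ with respect to either the outer or inner boundary are computed by the same trigonometry as in Section~\ref{sec:radial}, using the appropriate radius.

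The correctness rests on the following critical-configuration lemma, which I would prove first. For any valid neighborhood $S$ with witness center $v^*$ (so all of $S$ lies in the open $(b,1)$-annulus around $v^*$ and no point of $V \setminus S$ lies inside the radius-$1$ disk), the set of witness centers inducing the same $S$ is an open planar region bounded by circular arcs of radius $1$ (around vertices of $V$) and radius $b$ (around vertices of $V$). I slide $v^*$ within this region until it meets the boundary; by compactness and finiteness of $V$, at some point at least one vertex $u \in V$ lies exactly at distance $1$ or exactly at distance $b$ from the center. This critical position is therefore swept by either the outer-boundary sweep or the inner-boundary sweep anchored at $u$, and $S$ is recorded at the corresponding event. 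Conversely, every subset recorded by the algorithm clearly admits a center realizing it after an infinitesimal perturbation, so every output is valid.

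For the time analysis, I will observe that a vertex $w \in V$ can possibly enter or exit the outer boundary of a candidate disk sweeping around $u$ only if $\|u - w\|_2 < 2$, and can possibly interact with the inner boundary only if $\|u - w\|_2 < 1 + b < 2$. Hence each sweep around $u$ processes at most $O(\Delta)$ events and, provided the event angles can be put in cyclic order in $O(\Delta)$ time (the same ingredient used in \cite{buchanan2023node}), runs in $O(\Delta)$ time. Summing the two sweeps over all $n$ pivots yields the $O(n\Delta)$ bound.

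The main obstacle I anticipate is the critical-configuration lemma in full generality, particularly ensuring that subsets $S$ whose witness region is bounded only by inner-buffer arcs (for example the empty neighborhood, which can still be ``discovered'' in buffer-free regions far from any vertex) are captured; this is precisely why a second family of sweeps, anchoring a pivot at distance $b$, is needed rather than only the outer sweep inherited from~\cite{buchanan2023node}. A secondary care point is handling boundary ties and the strict inequality $\|v - w\|_2 > b$ consistently, which I would address by the standard device of recording, at each event, all four variants (include/exclude the entering vertex, include/exclude the anchor $u$) and discarding those that violate the buffer constraint.
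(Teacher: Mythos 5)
Your proposal matches the paper's own proof in all essentials: the paper likewise runs, for each pivot $u$, one sweep of the $(b,1)$-annulus with $u$ fixed on the outer boundary and one with $u$ fixed on the inner boundary (Algorithms~\ref{alg:outer_sweep} and~\ref{alg:inner_sweep}), maintains the running neighborhood together with a ``bad set'' of vertices inside the buffer circle (your counter $k$), records sets only when that set is empty, justifies completeness by perturbing a witness annulus until points of $V$ lie on its boundary (Proposition~\ref{prop:vxs_on_boundary}), and bounds the work per pivot by the number of points within distance $2$ to get $\mathcal{O}(n\Delta)$. The only cosmetic difference is that your critical-configuration lemma stops at one vertex on the boundary while the paper rotates to obtain a second (which is what makes the set appear at an actual sweep event), and the paper determines the admissible include/exclude variants of the anchor directly from whether it sits on the inner or outer boundary rather than by discarding violators; neither difference changes the substance.
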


Theorem~\ref{thm:buffer} follows from an analysis of Algorithm~\ref{alg:enumerate_buffer_regions}, 
% as well as Proposition~\ref{prop:vxs_on_boundary} which implies that it suffices to enumerate the subsets $S$ contained in a $(b,1)$-annulus with two vertices on its boundary and none inside the inner circle.
% We describe Algorithm~\ref{alg:enumerate_buffer_regions} 
which we describe in parts throughout this section.
We first find all subsets contained in a $(b,1)$-annulus with two vertices on its boundary and no vertices contained within its inner circle.
To do so, we conduct radial sweeps of annuli around a fixed points on their outer and inner boundaries.

\begin{algorithm}
\caption{Enumerate neighborhoods for a new vertex added to a unit disk graph with no other vertices within distance $b$ ($0 < b < 1$).}
\label{alg:enumerate_buffer_regions}
\begin{algorithmic}[1]
    \Require {Nonempty finite set of points $V \subset \mathbb{R}^2$ and $b \in (0,1)$}
    \Ensure {Subsets of $V$ contained in a $(b,1)$-annulus with no points in inner circle}
    % {sets of points contained in a unit-circle which has no points within distance $b$ of its center}
    \Statex
    
    \Function{BufferNeighborhoods}{$V,b$}
    \Let{$\mathrm{Nbrhds}$}{$\{\}$}
    \For{$v \in V$}
        \Let{$L_1$}{$\textsc{OuterSweep}(V, v, b)$}
        % \Comment{Determine points in $V$ which are contained in a $(b,1)$-annulus with $v$ on its outer boundary.}
        \For{$S \in \textsc{SetsFromIntervals}(L_1)$}
            \State{add $S$ and $S \cup v$ to $\mathrm{Nbrhds}$}
        \EndFor
        \Let{$L_2$}{$\textsc{InnerSweep}(V,v,b)$}
        \For{$S \in \textsc{SetsFromIntervals}(L_2)$}
            \State{add $S \cup v$ to $\mathrm{Nbrhds}$}
        \EndFor
    \EndFor
    \State \Return{$\mathrm{Nbrhds}$}
    % \Let{$L_1$}{$\cup_{p \in V} \textsc{OuterSweepRegions}(V,p,b)$}
    % \Let{$L_2$}{$\cup_{p \in V} \textsc{InnerSweepRegions}(V,p,b)$}
    % \State \Return {$L_1 \cup L_2$}
    \EndFunction
\end{algorithmic}
\end{algorithm}

% We begin by describing algorithms for sweeping an annulus around a fixed point $v$ on either its inner or outer circle and determine the sets $S$ contained in an annulus with two points from $V$ on its boundary and all other points in $V - S$ outside of the outer circle.

\begin{algorithm}
\caption{List of points and angles at which points enter and exit a $(b,1)$-annulus swept around a fixed point $v$ on its outer boundary}
%{Enumerate subsets of a finite set of points in the plane which are contained in a $(b,1)$-annulus with a fixed point on the boundary of its radius-$1$ circle and no points inside the radius-$b$ circle}
%{Sweep $(b,1)$-annulus around a point on the perimeter of outer circle, record other points as they enter or exit either the inner or outer circle}
\label{alg:outer_sweep}
\begin{algorithmic}[1]
    \Require{Nonempty finite set of points $V \subset \mathbb{R}^2$, $v \in V$, and $b \in (0,1)$}
    \Ensure{Points $w \in V$ and angles $\theta \in (-\pi/2, \pi/2)$ of rotation at which $w$ enters and exits a $(b,1)$-annulus swept counterclockwise around fixed point $v$ on outer boundary
    }
    \Statex

    \Function{OuterSweep}{$V,v,b$}
    \Let{$L$}{$\{ \}$}
    \For{$w \in V - v$ with $\|v - w\|_2 < 2$}
    \State
    \Comment{$w$ enters \& exits the sweep}
        \If{$1 - b < \|v - w\|_2 < 1 + b$}
        \State 
        \Comment{$w$ enters \& exits both circles}
            \Let{$\theta_{\mathrm{min}}$}{angle $w$ enters outer circle}
            \Let{$\tdis$}{angle $w$ enters inner circle}
            \Let{$\tre$}{angle $w$ exits inner circle}
            \Let{$\tmax$}{angle $w$ exits outer circle}
        \Else
        \State
        \Comment{$w$ only enters \& exits the outer circle}
            \Let{$\tmin$}{angle $w$ enters outer circle}
            \Let{$\tdis$, $\tre$}{None}
            \Let{$\tmax$}{angle $w$ exits outer circle}
        \EndIf
      \State{add $(w, \tmin, \tdis, \tre, \tmax)$ to $L$}
    \EndFor
        
    \State \Return{ $L$ }
    \EndFunction
\end{algorithmic}
\end{algorithm}

\begin{algorithm}
\caption{List of points and angles at which they enter and exit a $(b,1)$-annulus swept around a fixed point on its inner boundary}
%{Enumerate subsets of a finite set of points in the plane which are contained in a $(b,1)$-annulus with a fixed point on the boundary of its inner circle}
\label{alg:inner_sweep}
\begin{algorithmic}[1]
    \Require{Nonempty finite set of points $V \subset \mathbb{R}^2$, $v \in V$, and $b \in (0,1)$}
    \Ensure{Points $w \in V$ and angles $\theta \in (-\pi/2, \pi/2)$ of rotation at which $w$ enters and exits a $(b,1)$-annulus swept counterclockwise around fixed point $v$ on inner boundary}
    \Statex

    \Function{InnerSweep}{$V,v,b$}
    \Let{$L$}{$\{ \}$}
    \For{$w \in V - p$ with $\|v - w\|_2 < 1+b$}
    \State
    \Comment{$w$ enters \& exits the sweep}
    %$(b,1)$-annulus swept counterclockwise around fixed point $p$ on its inner circle
        \If{$1-b < \|v - w\|_2 < 2b$}
        \State 
        \Comment{$w$ enters \& exits both circles}
            \Let{$\tmin$}{angle $w$ enters outer circle}
            \Let{$\tdis$}{angle $w$ enters inner circle}
            \Let{$\tre$}{angle $w$ exits inner circle}
            \Let{$\tmax$}{angle $w$ exits outer circle}
        \ElsIf{$\|v - w\|_2 > 2b$}
        \State 
        \Comment{$w$ only enters \& exits outer circle}
            \Let{$\tmin$}{angle $w$ enters outer circle}
            \Let{$\tdis$, $\tre$}{None}
            \Let{$\tmax$}{angle $w$ exits outer circle}
        \ElsIf{$\|v - w\|_2 < 1-b$}
        \State 
        \Comment{$w$ only enters \& exits inner circle}
            \Let{$\tmin$, $\tmax$}{None}
            \Let{$\tdis$}{angle $w$ enters inner circle}
            \Let{$\tre$}{angle $w$ exits inner circle}
        \EndIf
    \State{add $(w, \tmin, \tdis, \tre, \tmax)$ to $L$}
    \EndFor
        
    \State \Return{ $L$ }
    \EndFunction
\end{algorithmic}
\end{algorithm}

First, we sweep a $(b,1)$-annulus around a fixed point $v \in V$ on its outer boundary (see Algorithm~\ref{alg:outer_sweep}).
Consider the $(b,1)$-annulus whose center is at $v + \left[ \begin{matrix} 1 & 0 \end{matrix} \right]^\T$.
We imagine a full counterclockwise rotation of the annulus around the fixed point $v$.
Points $w \in V$ with $1 - b < \|v - w\|_2 < 1 + b$ will enter and exit both the inner and outer circles of the annulus as it rotates.
Using trigonometric equations similar to those described in Section~\ref{sec:radial}, we record the angles of rotation for each of these four moments: let $\tmin$ denote the angle at which $w$ enters the outer circle, $\tdis$ the angle at which $w$ enters the inner circle (and disappears from the annulus), $\tre$ the angle at which $w$ exits the inner circle (and reappears in the annulus), and $\tmax$ the angle at which $w$ exits the outer circle.
We add the $5$-tuple $(w, \tmin, \tdis, \tre, \tmax)$ to a list $L$.
Points $w \in V$ with $\|v - w\|_2 < 1-b$ or $\|v - w\|_2 > 1 + b$ will enter and exit only the outer circle of the annulus as it rotates about $v$.
In this case, we set $\tdis = \mathrm{None}$ and $\tre = \mathrm{None}$.

Second, we sweep a $(b,1)$-annulus around a fixed point $v \in V$ on its inner boundary (see Algorithm~\ref{alg:inner_sweep}).
Consider the $(b,1)$-annulus centered at $v + \left[ \begin{matrix} b & 0 \end{matrix} \right]^\T$.
As it rotates counterclockwise about $v$, points $w \in V$ within distance $1 + b$ of $v$ will enter and exit the annulus.
If $1 - b < \|v - w\|_2 < 2b$, then $w$ will enter and exit the annulus twice in one full rotation, passing through both the inner and outer boundaries.
We again record the angles $\tmin$, $\tdis$, $\tre$, and $\tmax$ defined above, using basic trigonometry, and add the tuple $(w, \tmin, \tdis, \tre, \tmax)$ to a list $L$.
If $2b < \|v - w\|_2 < 1+b$, then $w$ only enters and exits the annulus once during the sweep, via the outer boundary.
We record the angles $\tmin$ and $\tmax$ and set $\tdis = \mathrm{None}$ and $\tre = \mathrm{None}$.
Finally, if $\|v - w\|_2 < 1-b$, then $w$ only enters and exits the annulus via the inner boundary, and we set $\tmin = \mathrm{None}$ and $\tre = \mathrm{None}$.

\begin{algorithm}
    \caption{Obtain sets of points contained in a $(b,1)$-annulus from a sweep around a given point $v$}
    \label{alg:sets_from_intervals}
    \begin{algorithmic}[1]
        %\Require{Nonempty finite set of points $V \subset \mathbb{R}^2$, $v \in V$, and $b \in (0,1)$}
        \Require{Nonempty finite subset $V$ of $\mathbb{R}^2$; $L$, the output of Algorithm~\ref{alg:outer_sweep} or Algorithm~\ref{alg:inner_sweep}}
        %\Ensure{Subsets of $V$ contained in a $(b,1)$-annulus with $v$ on the boundary and no points in the interior of its inner circle}
        \Ensure{Collection of subsets of $V$ contained in a $(b,1)$-annulus with no points in its inner circle}
        \Statex

        \Function{SetsFromIntervals}{$L$}
        %\Let{$L$}{$\textsc{OuterSweep}(V,v,b)$}
        \Let{$M$}{$[\ ]$}
        \Let{$R$}{$\textsc{InitialSet}(L)$} 
        \State
        \Comment{See Algorithm~\ref{alg:initial_set} in Appendix~\ref{appendix:InitialSet}}
        \Let{$B$}{$\{\}$} 
        \State
        \Comment{Bad vertices: contained in the inner circle}
        \For{$(w, \tmin, \tdis, \tre, \tmax) \in L$ and }
            \For{$\theta \in \{\tmin, \tdis, \tre, \tmax\}$}
                \If{$\theta$ is not None}
                    \State{add $(w, \theta, \mathrm{type})$ to $M$, where $\mathrm{type} \in \{\min, \mathrm{dis}, \mathrm{re}, \max\}$}
                \EndIf
            \EndFor
            \If{$\tre < \tdis$} 
            \State
            \Comment{$w$ is bad when sweep begins}
                \State{add $(w, \tdis, \mathrm{type}=\mathrm{dis})$ to $B$}
            \EndIf
        \EndFor
        \State{Sort $M$ by angles $\theta$ in increasing order}
        \Let{$\mathrm{Sets}$}{$\{\}$}
        %\Let{$C$}{$\{\}$}
        \For{$(w, \theta, \mathrm{type}) \in M$}
            \If{$\mathrm{type}=\max$}
                remove $w$ from $R$
                \If{$B = \emptyset$}
                    %\Let{$C$}{$R$}
                    %\State{add $C \cup w$ to $\mathrm{Sets}$}
                    add $R$ and $R \cup w$ to $\mathrm{Sets}$
                \EndIf
            \ElsIf{$\mathrm{type}=\min$}
                add $w$ to $R$
                \If{$B = \emptyset$}
                    %\Let{$C$}{$R$}
                    %\State{add $C$ to $\mathrm{Sets}$}
                    add $R$ and $R - w$ to $\mathrm{Sets}$
                \EndIf
            \ElsIf{$\mathrm{type}=\mathrm{dis}$}
                remove $w$ from $R$
                \If{$B = \emptyset$}
                    %\Let{$C$}{$R$}
                    %\State{add $C \cup w$ to $\mathrm{Sets}$}
                    add $R \cup w$ to $\mathrm{Sets}$
                \EndIf
                \State{add $w$ to $B$}
            \ElsIf{$\mathrm{type}=\mathrm{re}$}
                add $w$ to $R$
                \State{remove $w$ from $B$}
                \If{$B = \emptyset$}
                    %\Let{$C$}{$R$}
                    %\State{add $C$ to $\mathrm{Sets}$}
                    add $R$ to $\mathrm{Sets}$
                \EndIf
            \EndIf
        \EndFor
        \State \Return{$\mathrm{Sets}$}
        \EndFunction
    \end{algorithmic}
\end{algorithm}

We now describe how to obtain the sets $S$ of vertices contained in a $(b,1)$-annulus with $V - S$ outside of its outer boundary (see Algorithm~\ref{alg:sets_from_intervals}).
By Proposition~\ref{prop:vxs_on_boundary}, it suffices to find the sets $S$ contained in a $(b,1)$-annulus with two points on its boundary (keeping track of whether they are on the outer or inner boundary) and all other points in $V - S$ outside of the outer boundary.
One of the points on the boundary will be the point $v$ we are sweeping around, and the other will be a point that is entering or exiting the sweep.

Let $L_v$ be a list of tuples $(w, \tmin, \tdis, \tre, \tmax)$ obtained from a sweep around $v \in V$ using either Algorithm~\ref{alg:outer_sweep} or~\ref{alg:inner_sweep}.
We first determine the points contained in the annulus when the sweep begins (Algorithm~\ref{alg:initial_set}).
These are those points $w$ which are close enough to $v$ to enter and exit the sweep and are such that 
\begin{itemize}
    \item $\tmax < \tmin$ and $\tdis < \tre$, or 
    \item $\tmax < \tmin$ and $\tdis, \tre = \mathrm{None}$, or 
    \item $\tmin, \tmax = \mathrm{None}$ and $\tdis < \tre$.
\end{itemize}
We add these points to a ``running set" $R$.
Any points $w$ which are not contained in the annulus when the sweep begins but which are contained in its inner circle (those with $\tre < \tdis$) we add to a ``bad set" $B$.

From $L_v$, we make a new list $M_v$ of tuples $(w, \theta, \mathrm{type})$ to record the angles $\theta$, in increasing order, at which points $w$ enter and exit the annulus during the sweep, along with whether it is entering or exiting and which boundary it is passing through.
Specifically, $\mathrm{type}$ is one of `$\min$', `$\mathrm{dis}$', `$\mathrm{re}$', or `$\max$' depending on whether $\theta$ is the angle at which $w$ enters the outer circle, enters the inner circle, exits the inner circle, or exits the outer circle of the annulus, respectively.
% Specifically, $M_v$ is a list of tuples $(w, \theta, \mathrm{type})$ obtained from $L_v$ by taking each angle $\theta$ from a tuple $(w, \tmin, \tdis, \tre, \tmax)$ and its $\mathrm{type}$, being either .
% We sort $M_v$ so that the angles are increasing.

Now, we go through the tuples in $M_v$ in increasing order of $\theta$.
If a vertex $w$ exits the outer circle (that is, if the tuple $(w, \theta, \mathrm{type})$ has type `$\max$'), we remove $w$ from $R$. We then check if the set of bad vertices $B$ is empty and, if it is, add both $R$ and $R \cup w$ to a list $\mathrm{Sets}$ which will be the output of Algorithm~\ref{alg:sets_from_intervals}.
If a vertex $w$ enters the outer circle (if the tuple has type `$\min$'), we add it to $R$, and if $B = \emptyset$, we add $R$ and $R - w$ to $\mathrm{Sets}$.
If a vertex $w$ enters the inner circle (if the tuple has type `$\mathrm{dis}$'), we remove $w$ from $R$, check if $B = \emptyset$, and, if it is, add $R \cup w$ to $\mathrm{Sets}$.
Then, we add $w$ to $B$.
Finally, if a vertex $w$ exits the inner circle (if the tuple has type `$\mathrm{re}$'), we add it to $R$ and remove it from $B$.
If now $B = \emptyset$, we add $R$ to $\mathrm{Sets}$.

Once the list $M_v$ is exhausted, $\mathrm{Sets}$ contains all sets of vertices which are contained in a $(b,1)$-annulus with $v$ fixed on a boundary.
If $v$ is fixed on the outer boundary, then for each set $S$ in $\mathrm{Sets}$, both $S$ and $S \cup v$ are possible subsets of $V$ contained in a $(b,1)$-annulus with no points in its inner circle.
On the other hand, if $v$ was fixed on the inner boundary, then only $S \cup v$ is a possible subset of $V$ contained in a $(b,1)$-annulus with no points in its inner circle (see Proposition~\ref{prop:vxs_on_boundary}).

Performing Algorithms~\ref{alg:outer_sweep} and~\ref{alg:inner_sweep} each $n = |V|$ times, by Proposition~\ref{prop:vxs_on_boundary} we obtain all subsets $S$ of $V$ for which there exists a $(b,1)$-annulus containing $S$ and with every point in $V - S$ outside of the radius-$1$ circle in $\mathcal{O}(n^2)$ time.
This is summarized in Algorithm~\ref{alg:enumerate_buffer_regions}.

As with our previous algorithm in~\cite{buchanan2023node}, the run time of Algorithm~\ref{alg:enumerate_buffer_regions} can be improved using parameterized complexity.
Letting $\Delta$ denote the maximum number of points contained in a radius-$2$ circle (the maximum number of neighbors that a vertex added to $G$ can have), our algorithm runs in time $\mathcal{O}(n \Delta)$.

We also note that our algorithms can be generalized to unit ball graphs in $\mathbb{R}^3$ by sweeping two concentric spheres, one of radius $b$ and one of radius $1$, around the axis formed by the line segment between two vertices $u, v \in V$ with $\|u - v\|_2 < 2$.
We do not describe such an algorithm here, as it would be a lengthy description, and as unit disk graphs are reasonable models for many of the relevant applications mentioned here (including our running example of an autonomous satellite swarm imaging a region on the surface of an object).

By adjusting the value of $b$, we can use this algorithm to fill in a region with a reliable graph whose vertices are evenly spaced apart.
In Section~\ref{sec:formation_planning}, we will compare this algorithm to another method for spreading out vertices without decreasing reliability which we presently describe.

\subsection{Fruchterman-Reingold for ball graphs}\label{sec:spring}

We now describe an alternative approach for modifying a unit ball graph for reliability and area coverage purposes.
In particular, we investigate a method for providing more even area coverage without decreasing the reliability of the graph obtained using Algorithm~\ref{alg:enumerate_buffer_regions} or the simpler version from~\cite{buchanan2023node} described in Section~\ref{sec:radial}.

The Fruchterman-Reingold, or spring layout, algorithm introduced in~\cite{fruchterman1991graph} is a standard graph drawing method.
It is based on a simple idea: vertices should be reasonably spaced apart, and adjacent vertices should be closer together than pairs of nonadjacent vertices.
To achieve this, vertices are treated like similarly charged particles, repelling one another, while edges are treated like springs, resisting this force.
The algorithm iterates some number of times, at each iteration linearly scaling down this combination of attractive and repulsive forces (reducing the ``temperature") until vertices settle into place.

Spring layout algorithms have also been used in a variety of applications.
Variations of the Fruchterman-Reingold algorithm have been proposed for the deployment of large-scale wireless sensor networks~\cite{deng2019optimized,li2022fruchterman}.
A related concept of virtual force algorithms has also been studied in the context of wireless sensor network coverage (see, for example,~\cite{howard2002mobile,qi2022wireless}).

While the spring layout algorithm can provide a more even spread of vertices over a region, it can also drastically change the underlying ball graph.
Of concern to us is the scenario in which edges are deleted (when vertices which were within communication distance are moved further apart), decreasing the reliability of the resulting ball graph.
Extra edges, on the other hand, only increase the reliability.
Luckily, this simple algorithm is not hard to modify so that it respects adjacencies in a unit disk or ball graph.
We chose to modify the existing Fruchterman-Reingold algorithm from the NetworkX Python library~\cite{SciPyProceedings_11}.\footnote{\url{https://networkx.org/documentation/stable/_modules/networkx/drawing/layout.html}}
There are a number of ways that one might stop edges in a unit ball graph from being broken by the Fruchterman-Reingold algorithm.
Here, we do so by leveraging the temperature: at each iteration, we check whether any pair vertices which were previously within distance $1$ are now of distance greater than $1$ apart.
If such a pair exists, we simply go back to the positions at the last iteration, reduce the temperature, and move on to the next iteration.
We now describe our modified Fruchterman-Reingold algorithm for ball graphs in more detail.

Let $G$ be a unit ball graph of order $n$ with vertex set $V$ and edge set $E$.
We consider $G$ to lie within a {\em frame}, a minimum rectangle or rectangular prism containing all points in $V$.
Let $k$ denote the optimal distance between an average pair of vertices at the end of the algorithm.
In two dimensions, it is common to set $k = \sqrt{A/n}$, where $A$ denotes the area of the frame.
Finally, let $t$ denote the initial {\em temperature}, which determines how far we allow vertices to move at the first iteration.
For the algorithm we chose to modify, $t$ is $1/10$ of the maximum length of a side of the frame.
For each pair of distinct vertices $u,v \in V$, we calculate the force $f_r(u,v) = k^2 / d_{u,v}^{2}$ of repulsion.
When $uv \in E$, we calculate the force $f_a(u,v) = d_{u,v} / k$ of attraction, and we let $f_a(u,v) = 0$ when $uv \notin E$.
Let $F$ denote the matrix of forces between vertices: $F_{u,v} = f_r(u,v) - f_a(u,v)$.
We use these forces to calculate the initial displacement vector $\Vec{\mathrm{disp}}_v$ for each vertex $v$.
In two dimensions, letting $v = \left[ \begin{matrix} x_v & y_v \end{matrix} \right]^\T$, the first coordinate of $\Vec{\mathrm{disp}}_v$ is $\sum_{w \neq v} (x_v - x_w) F_{v,w}$ and the second is $\sum_{w \neq v} (y_v - y_w) F_{v,w}$.
We scale each displacement vector according to the temperature, obtaining a vector $\delta_v = \Vec{\mathrm{disp}}_v \cdot t / l_v$, where $l_v$ is the length of $\Vec{\mathrm{disp}}_v$.

Classical Fruchterman-Reingold algorithms come in two types: at each iteration, either all vertices move at once according to their displacement vectors, or the vertices move one at a time, requiring a recalculation of forces at each iteration.
In the former case, the new position for each vertex $v$ is $\left[ \begin{matrix} x_v & y_v \end{matrix} \right]^\T + \delta_v$.
In the latter case, we cycle through the vertices of $G$, moving a single vertex $v$ according to $\delta_v$ at each iteration.
Our solution to the problem of breaking edges is essentially the same in either case: if any pair of vertices previously within distance $1$ would be at distance $> 1$ after any iteration, we avoid moving any vertices at all, lower the temperature by $t / (\mathrm{it} + 1)$, where `$\mathrm{it}$' denotes the total number of iterations, and move on to the next iteration (essentially weakening the forces $F_{u,v}$).
If all pairs of vertices previously within distance $1$ remain within distance $1$, we keep the new positions and again lower the temperature by $t / (\mathrm{it} + 1)$ for the next iteration.
As the temperature decreases to $t / (\mathrm{it} + 1)$ at the last iteration, the positions of the vertices stabilize.

% In the next section, we will add vertices to maximize reliability and then use this algorithm to provide more even spacing.
% Perhaps surprisingly, we will show that iterating Algorithm~\ref{alg:enumerate_buffer_regions} provides a better spread over a region and a comparable (though slightly lower) reliability than using the two algorithms in conjunction.

\section{Formation planning for autonomous swarms}\label{sec:formation_planning}

In this section, we apply Algorithm~\ref{alg:enumerate_buffer_regions} and the modified Fruchterman-Reingold algorithm from Section~\ref{sec:spring} to obtain swarm formations with even area coverage and high reliability.
Returning to our running example, we consider an autonomous swarm of satellites assigned to image a region on the surface of an object.
We begin by assigning satellites to image the outer boundary of the region.
Our goal is to assign the remaining satellites to cover the interior of the region so that the entire region is evenly covered and so that the communication network is not only connected but highly reliable.
We note that search and rescue missions by robots, imaging of an ocean bed by autonomous underwater vehicles, or use of unmanned aerial vehicles for precision agriculture may be modeled similarly.
Our findings indicate that using Algorithm~\ref{alg:enumerate_buffer_regions} alone (as opposed to using it in conjunction with our modified Fruchterman-Reingold algorithm) is particularly effective in accomplishing this goal.

\begin{figure}
    \centering
    \includegraphics[scale=.45]{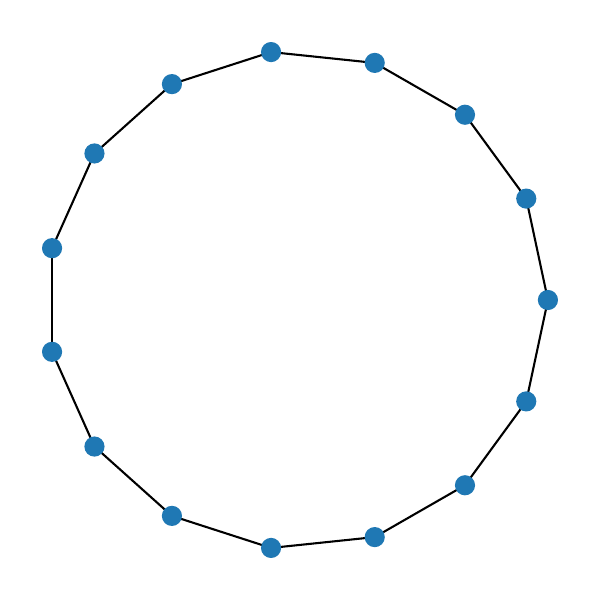}
    \caption{Fifteen agents cover the outer boundary of a circular region, modeled by a unit disk graph forming a regular $15$-gon whose edges are of length $0.9$}
    \label{fig:15-gon}
\end{figure}

As a basis for our simulations, we imagine a circular disk to be covered by a swarm of $30$ agents.\footnote{This is simply for illustration; regions of other shapes, and swarms of other sizes, may be considered as well.}
We begin with a unit disk graph on $15$ vertices forming a regular $15$-gon whose edges have length $0.9$ (see Figure~\ref{fig:15-gon}).
These vertices outline a circular region with area about $14.717$ (radius $\approx 2.16$).
We add $15$ additional vertices one at a time to this unit disk graph using four different methods:

\begin{enumerate}[(M1)]

    \item\label{item:rand}
    {\em Random:}
    Add vertices randomly to the interior of the region, conditioned on them being within unit distance of at least one existing vertex ({\em i.e.}, conditioned on connectedness).

    \item\label{item:rand_spring} 
    {\em Random \& Spring:}
    Add vertices as in item~(M\ref{item:rand}), 
    but apply the Fruchterman-Reingold algorithm from Section~\ref{sec:spring} each time a vertex is added, leaving the positions of the initial $15$ vertices fixed.
    
    \item\label{item:buffer} 
    {\em Buffer:}
    Use Algorithm~\ref{alg:enumerate_buffer_regions} and Monte Carlo simulations to add vertices which maximize the resulting reliability while being at distance at least $0.65$ from any other vertex.
    
    \item\label{item:buffer_spring} 
    {\em Buffer \& Spring:}
    Add vertices as in item~(M\ref{item:buffer}), but apply the Fruchterman-Reingold algorithm from Section~\ref{sec:spring} each time a vertex is added.
\end{enumerate}

The specifics of each of these methods are discussed below, and examples of the resulting graphs are depicted in Figure~\ref{fig:sim_pics}.
For a reliability measure, we use the all-terminal reliability with edge-operation probabilities all $90\%$.
These edge-operation probabilities may be adjusted, and unreliable vertices introduced, with additional information regarding the agents in the application and their communication devices.
Using the Zero-One Estimator theorem of Karp and Luby~\cite{karp_monte-carlo_1985}, or one of its improved forms (see, {\em e.g.},~\cite{dagum2000optimal}), one can easily compute upper bounds on the number of Monte Carlo simulations necessary to obtain, with probability $1 - \delta$, an estimate of $\Rel_A$ with relative error $\epsilon$ (known as an $(\epsilon, \delta)$-approximation).
For a graph $G$ with high reliability, however, these upper bounds can be quite large.
Such Monte Carlo techniques proved inefficient for the more reliable graphs produced by the methods in items~(M\ref{item:buffer}) and~(M\ref{item:buffer_spring}) above.
Instead, we use an open-source program called $\mathcal{K}$-RelNet, introduced in~\cite{paredes2019principled}, to obtain $(\epsilon, \delta)$-estimates of $\Rel_A(G)$.
Essentially, the program reduces the problem of finding all-terminal reliability (or the more general $K$-terminal reliability, see~\cite{colbourn_combinatorics_1987}) to counting the number of assignments satisfying a Boolean formula in conjunctive normal form.
It then obtains an $(\epsilon, \delta)$-approximation of this count using a state-of-the-art approximate model counter ApproxMC, introduced in~\cite{chakraborty2013scalable} and now in its sixth version, ApproxMC6~\cite{yang2023rounding}.
We set $\epsilon = 0.8$ and $\delta = 0.2$ for our relative estimates; these are the default values for the $\mathcal{K}$-RelNet program, and a relative error of $0.8$ provides quite a small absolute error bound for the more reliable graphs in question.

\begin{figure*}[!t]
\centering
\subfloat[An example of method~(M\ref{item:rand})]{\includegraphics[width=2.5in]{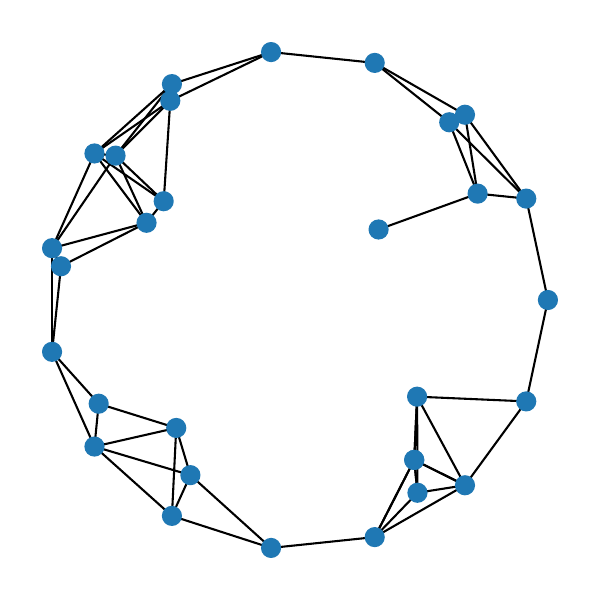}%
\label{subfig:item1}}
\hfil
\subfloat[An example of method~(M\ref{item:rand_spring})]{\includegraphics[width=2.5in]{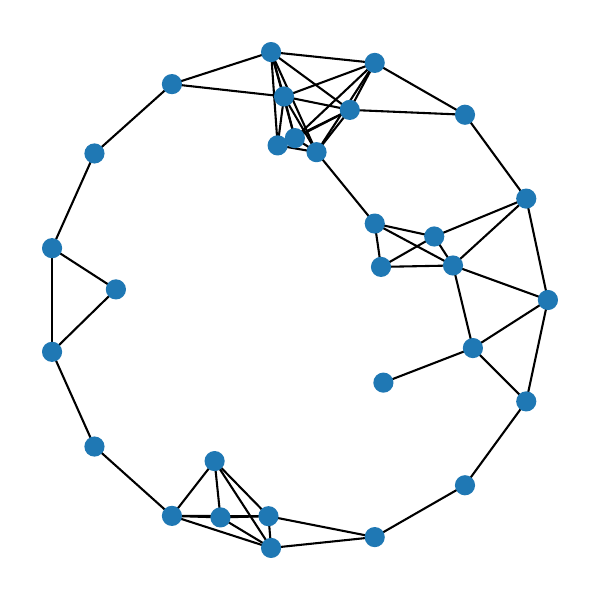}%
\label{subfig:item2}}
\\
\subfloat[An example of method~(M\ref{item:buffer})]{\includegraphics[width=2.5in]{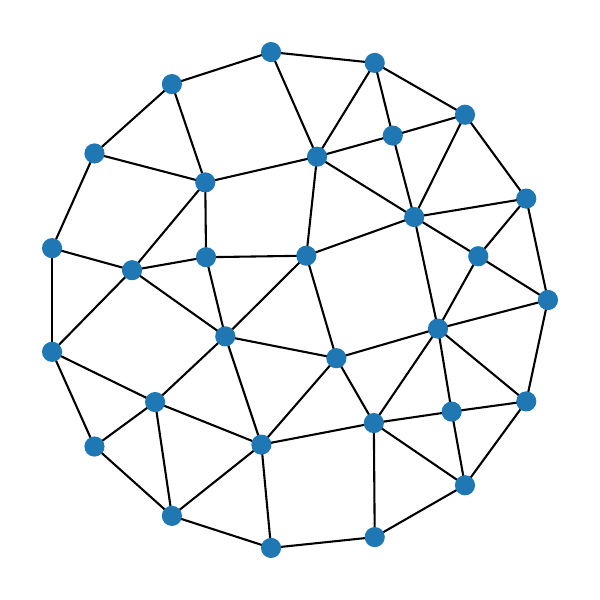}%
\label{subfig:item3}}
\hfil
\subfloat[An example of method~(M\ref{item:buffer_spring})]{\includegraphics[width=2.5in]{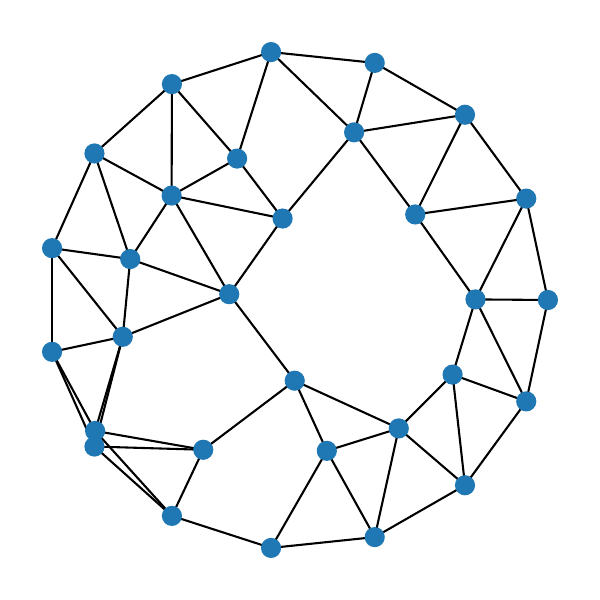}%
\label{subfig:item4}}
\caption{Fifteen vertices are added to the interior of a region outlined by a regular $15$-gon with edge-lengths of $0.9$ according to methods~(M\ref{item:rand})-(M\ref{item:buffer_spring}).}
\label{fig:sim_pics}
\end{figure*}

% \begin{figure}
%     \centering
%     \begin{subfigure}{.45\textwidth}
%         \centering
%         \includegraphics[scale=.45]{rands2_before_spring_15.pdf}
%         \caption{An example of method~(M\ref{item:rand})}
%         \label{subfig:item1}
%     \end{subfigure}
%     \hfill
%     \begin{subfigure}{.45\textwidth}
%         \centering
%         \includegraphics[scale=.45]{rands_after_spring_15.pdf}
%         \caption{An example of method~(M\ref{item:rand_spring})}
%         \label{subfig:item2}
%     \end{subfigure}
%     \begin{subfigure}{.45\textwidth}
%         \centering
%         \includegraphics[scale=.45]{buffer_nospring_15_2025.04.01.pdf}
%         \caption{An example of method~(M\ref{item:buffer})}
%         \label{subfig:item3}
%     \end{subfigure}
%     \hfill
%     \begin{subfigure}{.45\textwidth}
%         \centering
%         \includegraphics[scale=.45]{bs_after_spring_15.pdf}
%         \caption{An example of method~(M\ref{item:buffer_spring})}
%         \label{subfig:item4}
%     \end{subfigure}
%     \caption{Fifteen vertices are added to the interior of a region outlined by a regular $15$-gon with edge-lengths of $0.9$ according to methods~(M\ref{item:rand})-(M\ref{item:buffer_spring}).}
%     \label{fig:sim_pics}
% \end{figure}

As a measure of area coverage, we calculate the radius of the largest empty circle (containing no vertices of the graph) whose center is contained in the convex hull of the vertex set.
The region will be well-covered if the radius of the largest empty circle is small.
A largest empty circle whose center is contained in the convex hull of a set of $n$ points can be found in $\mathcal{O}(n \log{n})$ time~\cite{toussaint1983computing}.
In Table~\ref{table:simulations}, we record the average all-terminal reliability and the average radius of the largest empty circle over $100$ graphs obtained using each of the methods~(M\ref{item:rand})-(M\ref{item:buffer_spring}).
The results in Table~\ref{table:simulations} are compared, and standard deviations depicted, in Figure~\ref{fig:comparisons}.

\begin{table}
    \caption{The average all-terminal reliability $\Rel_A$ with edge-operation probabilities of $90\%$ and the average radius of the largest empty circle (LEC) over $100$ graphs obtained using each method~(M\ref{item:rand})-(M\ref{item:buffer_spring}).}
    \centering
    \begin{tabular}{c || c | c | c | c}
        & (M\ref{item:rand}) & (M\ref{item:rand_spring}) & (M\ref{item:buffer}) & (M\ref{item:buffer_spring}) \\ \hline \hline
        Avg.\:$\Rel_A(G)$ & 81.62\% & 85.16\% & 99.09\% & 99.20\% \\ \hline
        Avg.\:radius of LEC & 1.2445 & 1.4287 & 0.6728 & 0.9697 \\
    \end{tabular}
    \label{table:simulations}
\end{table}

\begin{figure*}[!t]
        \centering
        \subfloat[(M\ref{item:rand}) versus (M\ref{item:rand_spring})]{\includegraphics[width=2.5in]{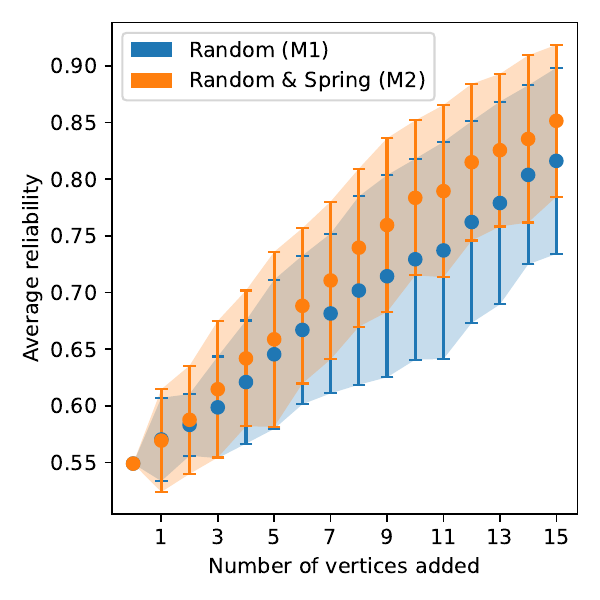}%
        \label{fig:plot_randVrandspring}}
        \hfil
        \subfloat[(M\ref{item:buffer}) versus (M\ref{item:rand_spring})]{\includegraphics[width=2.5in]{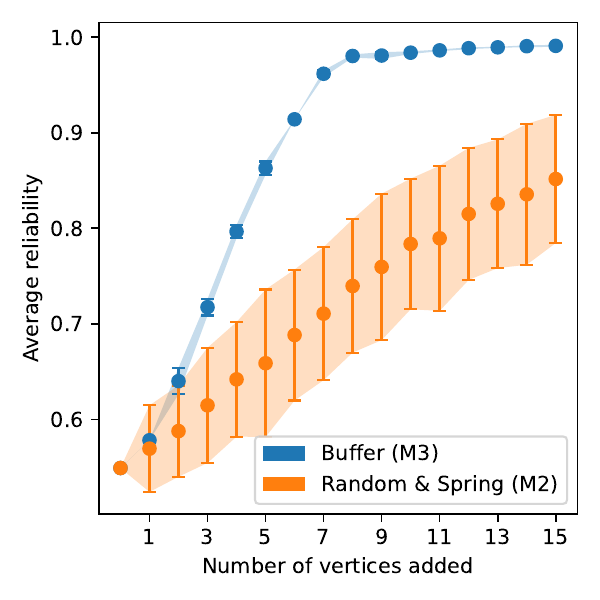}%
        \label{fig:plot_buffernospringVrandspring}}
        \\
        \subfloat[(M\ref{item:buffer_spring}) versus (M\ref{item:rand_spring})]{\includegraphics[width=2.5in]{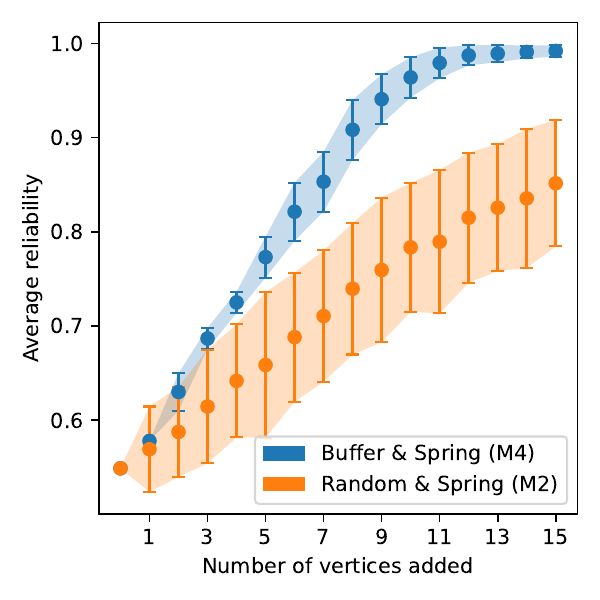}%
        \label{fig:plot_bufferspringVrandspring}}
        \caption{A comparison of average all-terminal reliability (with edge-operation probabilities of $90\%$) of graphs obtained via methods (M\ref{item:rand})-(M\ref{item:buffer_spring}). Shaded regions depict one standard deviation.}
        \label{fig:comparisons}
\end{figure*}

We first consider methods~(M\ref{item:rand}) and (M\ref{item:rand_spring}).
Averaged over $100$ trials, the all-terminal reliability of a graph obtained by adding $15$ vertices one at a time uniformly at random to the interior of the region, conditioned on connectedness of the resulting graph, is about $81.62\%$.
The average radius of the largest empty circle is about $1.245$.

The Fruchterman-Reingold algorithm has quite a different effect when applied to the graphs obtained via method (M\ref{item:rand}) than one might expect: vertices are pushed away from the interior of the region towards the boundary.
This increases, rather than decreases, the uncovered area of the region.
On the other hand, it also has the effect of slightly increasing the reliability.
We compared both types of our modified Fruchterman-Reingold algorithm from Section~\ref{sec:spring} (either all non-fixed vertices move at each iteration, or vertices move one at a time), and found that the former method was more effective for both reliability and area coverage purposes.

Averaged over $100$ trials, the all-terminal reliability of the graph obtained by adding $15$ vertices one at a time uniformly at random to the interior of the region, applying our modified Fruchterman-Reingold algorithm after each added vertex, is about $85.16\%$.
The average radius of the largest empty circle is about $1.43$.
Figure~\ref{fig:plot_randVrandspring} plots the average reliabilities obtained using method (M\ref{item:rand}) against those using method (M\ref{item:rand_spring}).

Using methods~(M\ref{item:rand}) and~(M\ref{item:rand_spring}) as a baseline, we now consider method~(M\ref{item:buffer}).
Having experimented with various buffer values $b$, we noted that, for $b \geq 0.75$, the region outlined by the fifteen satellites in Figure~\ref{fig:15-gon} tends to fill up before fifteen vertices can be added to the interior, meaning that there are no more points within the outlined region which are of distance at least $b$ from every vertex.
Note that, if the region fills up like this, then the radius of the largest empty circle is at most $b$.
We chose the value $b = 0.65$ as it fills the region relatively evenly and nearly fully after fifteen iterations.
Of course, depending on the imaging device used by the agents in the application, this value of $b$ may need to be altered accordingly.

We average the all-terminal reliability of $100$ graphs obtained by adding $15$ vertices to the interior of a region outlined by $15$ satellites using Algorithm~\ref{alg:enumerate_buffer_regions} ($b = 0.65$) and choosing the most reliable candidate at each step.
The average reliability after adding $15$ vertices is about $99.09\%$.
The average radius of the largest empty circle is about $0.67$.
The results are displayed in Figure~\ref{fig:plot_buffernospringVrandspring}.

Finally, we consider method~(M\ref{item:buffer_spring}).
Each time we add a vertex to the interior of the region to maximize reliability while being of distance at least $0.65$ from every other vertex, we apply our Fruchterman-Reingold algorithm from Section~\ref{sec:spring}.
We again use the version of our algorithm in which all non-fixed vertices move at each iteration.
Note that this algorithm may move vertices which were placed outside of buffer distance closer together.
Indeed, the largest empty circle is, on average, larger in this case.
Over $100$ trials, the average radius of the largest empty circle is about 
% $1.3687$
$0.97$.
The average all-terminal reliability after adding $15$ vertices is about
% $99.59\%$
$99.2\%$.
The results are displayed in Figure~\ref{fig:plot_bufferspringVrandspring}.

\section{Future directions}\label{sec:future_directions}

In this paper, we considered various possible solutions to the problem of producing unit ball graphs with high reliability whose vertices are evenly spread over a given region.
We found that placing vertices one at a time in locations which are at least some fixed distance $b$ from all other vertices (using Algorithm~\ref{alg:enumerate_buffer_regions}), and which maximize the reliability over all such locations (using Monte Carlo simulations) provides a better method than using a modified spring layout algorithm for unit ball graphs.

These methods focused on balancing the reliability and area coverage objective functions.
Using the spring layout method, we attempted to spread out vertices while retaining the reliability of the original graph.
It would be of interest to accomplish this task using a less heuristic method.
We thus pose the following.

\begin{problem}
    Design an algorithm which takes as input a unit ball graph $G$ and outputs a new unit ball graph, containing all edges of $G$, whose largest empty circle centered in the convex hull of its vertex set is as small as possible.
\end{problem}

We now turn to the method which uses Algorithm~\ref{alg:enumerate_buffer_regions}: adding vertices one at a time to maximize reliability while enforcing a buffer distance.
While the estimated reliabilities of the graphs obtained using methods~(M\ref{item:buffer}) and~(M\ref{item:buffer_spring}) were quite high, even more reliable graphs may have been obtained by adding more than one vertex at a time.
Algorithms which approximate the possible neighborhoods for a set of vertices added in a given configuration are not hard to develop, but to list all such neighborhoods will require more complicated techniques.

\begin{problem}\label{prob:2ormore}
    Design an algorithm to find the locations to add two or more vertices to a unit ball graph and maximize reliability.
\end{problem}

In a similar vein, we have not fully resolved the problem of finding the most reliable unit ball graphs of a given order which cover a given area (even for the circular area considered in Section~\ref{sec:formation_planning}).
This problem is likely a difficult one, as finding most reliable (abstract) graphs of a given order and size remains an open problem.
However, the constraints here are quite different, and we pose this problem for completeness.

\begin{problem}
    Let $G$ be a unit disk graph whose vertices outline the boundary of a convex region, and let $p \in (0,1)$.
    Given a positive integer $k$ and positive real number $\ell$, find a unit disk graph obtained by adding $k$ vertices to $G$ whose largest empty circle has radius at most $\ell$ (should such a graph exist) which maximizes $\Rel_A$ (edge-operation probabilities $p$) over all such graphs.
\end{problem}

{\appendices
\section{Vertices on the boundary}\label{appendix:vxs_on_boundary} 

In our proof of Theorem~\ref{thm:buffer}, we claimed that, in order to find the set of all regions of intersection of a set of annuli centered at the points $v$ in $V$, it suffices to find the subsets of $V$ contained in a $(b,1)$-annulus with no points from $V$ in its inner circle, and with two points from $V$ on its boundary.
This claim follows from the following proposition.

\begin{proposition}\label{prop:vxs_on_boundary}
    Let $V \subset \mathbb{R}^2$ be a finite set of points in general position, and let $\emptyset \neq S \subseteq V$.
    If $S$ is contained in a $(b,c)$-annulus with all points in $V - S$ outside of its outer boundary, then there exists a second $(b,c)$-annulus with two points $u,v \in V$ on its boundary which contains $S - \{u,v\}$ and is such that all points in $V - (S \cup \{u,v\})$ lie outside of the outer boundary.
    
    Further, if $u$ and $v$ lie on the outer boundary of a $(b,c)$-annulus which contains $S$ and no points within its inner circle, then there are $(b,c)$-annuli containing each of $S$, $S \cup \{u\}$, $S \cup \{v\}$, and $S \cup \{u,v\}$, with no points on the boundary, and the rest of the points in $V$ outside of the outer circle.
    If one of $u$ or $v$ lies on the inner boundary, say $u$, then only two of these sets can be contained in such an annulus: $S \cup \{u\}$ and $S \cup \{u,v\}$.
    If both $u$ and $v$ lie on the inner boundary, then only $S \cup \{u,v\}$ is possible.
\end{proposition}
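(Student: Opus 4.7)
The proposition contains an existence claim (a witnessing annulus can be chosen with two points of $V$ on its boundary) and a local-realizability claim (which of the four sets $S$, $S \cup \{u\}$, $S \cup \{v\}$, $S \cup \{u,v\}$ arises as the open-interior set of a nearby $(b,c)$-annulus). The plan is to prove the first part by continuously sliding the center of the annulus until two constraints tighten, and the second part by a local analysis of the four open quadrants cut out at the center by those two tight constraints.

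\textbf{Part 1: sliding to two boundary points.} Let $A_0$ be a witnessing $(b,c)$-annulus centered at $p_0$, and let
\[
\mathcal{A} = \bigl\{ p \in \mathbb{R}^2 : b < \|p - s\|_2 < c \text{ for all } s \in S,\ \|p - w\|_2 > c \text{ for all } w \in V - S \bigr\}.
\]
Each defining inequality is strict and $p_0 \in \mathcal{A}$, so $\mathcal{A}$ is open and nonempty; since $S \ne \emptyset$, any $s \in S$ gives $\mathcal{A} \subseteq \{q : \|q - s\|_2 < c\}$, so $\mathcal{A}$ is also bounded. Now translate $p_0$ along a generic straight line; boundedness forces exiting $\mathcal{A}$ at a first exit time, and in general position exactly one defining inequality becomes tight there, putting a unique $u \in V$ on either the inner or outer circle of the current annulus. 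Restrict $p$ to the circle $C_u$ around $u$ (of radius $b$ or $c$, whichever keeps $u$ on its current boundary); the remaining strict inequalities cut out an open arc of $C_u$ containing the current $p$, and sliding along this arc until a further inequality becomes tight yields a second point $v \in V$ on the annulus boundary. Continuity of the constraints ensures that $S - \{u,v\}$ still lies strictly inside the resulting annulus and $V - (S \cup \{u,v\})$ still lies strictly outside its outer boundary, which is the first claim.

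\textbf{Part 2: perturbation and case analysis.} For the second claim, suppose $u, v \in V$ lie on the boundary of a $(b,c)$-annulus centered at $p$ with $S$ strictly inside, no points of $V$ in the open inner disk, and $V - (S \cup \{u, v\})$ strictly outside the outer disk. Write $r_u, r_v \in \{b, c\}$ for the two radii on which $u$ and $v$ sit. In general position the two circles $\|q - u\|_2 = r_u$ and $\|q - v\|_2 = r_v$ meet transversally at $p$, so a punctured neighborhood of $p$ is partitioned into four open quadrants indexed by the signs of $\|q - u\|_2 - r_u$ and $\|q - v\|_2 - r_v$. For any $p'$ in such a quadrant, continuity preserves the other strict inequalities, so the $(b,c)$-annulus centered at $p'$ behaves as before on $V - \{u, v\}$, and the sign of $\|p' - u\|_2 - r_u$ decides where $u$ now sits: when $r_u = c$ the two options are ``strictly inside the annulus'' and ``strictly outside the outer disk''; when $r_u = b$ they are ``strictly inside the annulus'' and ``strictly inside the forbidden inner disk''. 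The same holds for $v$.

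\textbf{Assembling the cases and the main obstacle.} Combining these local choices gives exactly the enumeration in the proposition: if $r_u = r_v = c$, all four quadrants are admissible and realize $S$, $S \cup \{u\}$, $S \cup \{v\}$, and $S \cup \{u,v\}$; if exactly one of the radii equals $b$, say $r_u = b$, the two quadrants placing $u$ inside the forbidden inner disk are ruled out, leaving only $S \cup \{u\}$ and $S \cup \{u,v\}$; if $r_u = r_v = b$, only the quadrant pushing both $u$ and $v$ into the annulus remains, yielding only $S \cup \{u,v\}$. The main technical obstacle should lie in the sliding argument of Part 1, specifically in verifying that each sliding step actually encounters a newly tight constraint (rather than circling along $C_u$ indefinitely or stalling at an already-tight constraint) and in handling any configurations not ruled out by the general position hypothesis; both issues should reduce to compactness of $\overline{\mathcal{A}}$ and $C_u \cap \overline{\mathcal{A}}$ together with a careful counting of simultaneous tight constraints.
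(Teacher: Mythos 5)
Your proof is correct and follows essentially the same route as the paper's: translate the annulus (equivalently, its center) until one constraint from $V$ becomes tight, then rotate about that point (i.e.\ slide the center along $C_u$) until a second becomes tight, and finally perturb to decide which of the four sets are realizable, using general position to rule out the tangential (diametrically opposite) configuration. Your four-quadrant decomposition at the center is just a more systematic packaging of the paper's rotate-then-shift perturbation, and the one worry you flag in Part 1 --- that the slide along $C_u$ might never meet a second tight constraint, e.g.\ when every other point of $V$ is too far from $u$ --- is a degenerate case that the paper's own proof also elides.
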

\begin{proof}
    First, let $A$ be a $(b,c)$-annulus with $S \subset A$ and $V - S$ lying outside of the outer boundary of $A$.
    It is not hard to see that we can shift $A$ slightly so that some point $u \in V$ is on its outer or inner boundary. We now rotate this shifted annulus around the fixed point $u$ until a second point $v$ lies on its boundary to obtain the desired annulus.

    On the other hand, suppose that $A'$ is a $(b,c)$-annulus with $u,v \in V$ on its boundary and no points in $V$ within its inner circle.
    Let $S$ denote the set of points contained in $A'$ (note $u,v \notin S$).
    Since the points in $V$ are in general position, we may assume that $u$ and $v$ do not lie on a diameter of $A'$.
    First, suppose $u$ lies on the outer boundary. 
    We can rotate $A'$ around $u$ to obtain an annulus containing $S \cup \{v\}$ with only $u$ on its boundary.
    Then, shifting the resulting annulus slightly, we can obtain annuli containing either $S \cup \{u,v\}$ or $S \cup \{v\}$ with all other points in $V$ outside of the its outer circle.
    % We will show that there exist $(b,c)$-annuli $A_1, \ldots, A_4$, with no points in $V$ on their boundaries or within their inner circles, which contain $S$, $S \cup \{u\}$, $S \cup \{v\}$, and $S \cup \{u,v\}$ but no other points from $V$.
    Similarly, if $v$ is on the outer boundary of $A'$, we could have rotated $A'$ in the other direction to obtain an annulus containing $S$ but with $V - (S \cup \{u\})$ outside of its outer boundary.
    In this way, we obtain annuli containing all four sets $S$, $S \cup \{u\}$, $S \cup \{v\}$, and $S \cup \{u,v\}$ in their interiors, and no other points in their inner circle.

    On the other hand, if $u$ is on the inner boundary, then any shift of the annulus puts $u$ inside the inner circle if not in the annulus.
    Thus, only the sets $S \cup \{u\}$ and $S \cup \{u,v\}$ are possibilities in case $v$ is on the outer boundary, and only $S \cup \{u,v\}$ is possible if both are on the inner boundary of $A'$.
\end{proof}

\section{InitialSet algorithm}\label{appendix:InitialSet}

\begin{algorithm}[H]
    \caption{Determine initial set of vertices in annulus when a sweep begins}
    \label{alg:initial_set}
    \begin{algorithmic}[1]
        \Require{$V \subset \mathbb{R}^2$ and output $L$ of Algorithm~\ref{alg:outer_sweep} or Algorithm~\ref{alg:inner_sweep}}
        \Ensure{Subset of $V$ contained in annulus when sweep begins}
        \Statex

        \Function{InitialSet}{$L$}
        \Let{$R$}{$\{ \}$}
        \For{$(w, \tmin, \tdis, \tre, \tmax) \in L$}
            \If{$\tre$ is None and $\tmax < \tmin$}
                \State{add $(w, \tmin, \mathrm{type}=\min)$ to $R$}
            \ElsIf{$\tdis < \tre$ and $\tmax$ is None}
                \State{add $(w, \tre, \mathrm{type}=\mathrm{re})$ to $R$}
            \ElsIf{$\tdis < \tre$ and $\tmax < \tmin$}
                \If{$\tdis < 0$}
                    \State{add $(w, \tmin, \mathrm{type}=\min)$ to $R$}
                \ElsIf{$\tre > 0$}
                    \State{add $(w, \tre, \mathrm{type}=\mathrm{re})$ to $R$}
                \EndIf
            \EndIf
        \EndFor
        \State \Return{$R$}
        \EndFunction
    \end{algorithmic}
\end{algorithm}
}

% \bibliographystyle{IEEEtran}
% \bibliography{references.bib}

% Generated by IEEEtran.bst, version: 1.14 (2015/08/26)

\end{document}